\newcommand{\hide}[1]{}
\newcommand{\ABox}{
\raisebox{3pt}{\framebox[6pt]{\rule{6pt}{0pt}}}}
\newenvironment{proof}{{\bf Proof:}}{\hfill\ABox}
\newtheorem{theorem}{{\bf Theorem}}
\newtheorem{lemma}{Lemma}
\newtheorem{definition}[theorem]{Definition}
\newcommand{\lemlab}[1]{\label{lemma:#1}}
\newcommand{\thmlab}[1]{\label{thm:#1}}
\newcommand{\figlab}[1]{\label{fig:#1}}
\newcommand{\seclab}[1]{\label{sec:#1}}
\newcommand{\figref}[1]{\ref{fig:#1}}
\def\a{{\alpha}}
\def\P{{\mathcal P}}
\def\C{{\mathcal C}}
\def\V{{\mathcal V}}
\def\S{{\Sigma}}
\def\o{{\omega}}
\def\bC{{\partial C}}
\def\R{{\mathbb{R}}}
\newcommand{\squeezelist}{\setlength{\itemsep}{0pt}}
\title{%
A Note on Unbounded Polyhedra\\
Derived from Convex Caps
} 
\author{%
Joseph O'Rourke
}
\begin{document}
\maketitle
\begin{abstract}
The construction of an unbounded polyhedron from a
``jagged'' convex cap is described, and several of its
properties discussed, including its relation to
Alexandrov's ``limit angle."
\end{abstract}

\section{Introduction}
\seclab{Introduction}
One step of the proof in~\cite{o-eunfcc-17}
extended a convex polyhedral cap 
$\C$ an unbounded polyhedron $\C^\infty$.
In this note, we explore some properties of such unbounded polyhedra,
including 
Alexandrov's ``limit angle''~\cite[p.29ff]{a-cp-05}.
In particular, we show:
\begin{enumerate}
\squeezelist
\item A ``jagged" convex cap is homeomorphic to a disk.
\item How to construct the unbounded polyhedron $\C^\infty$ from a given cap $\C$.
\item How to construct Alexandrov's limit angle $\V$.
\item The relationship between the curvature of the limit angle apex to the curvature
at the vertices of $\C$ and of $\C^\infty$.
\end{enumerate}
None of these results are 
new and none surprising,
so these remarks amount to a tutorial on the topic.

\section{Convex Cap}
\seclab{ConvexCap}
In~\cite{o-eunfcc-17} a convex cap was defined as the intersection of
a half-space with a convex polyhedron:
\begin{quotation}
\noindent
``Let $\P$ be a convex polyhedron, and let $\phi(f)$  be
the angle the normal to face $f$ makes with the $z$-axis.
Let $H$ be a halfspace whose bounding plane is orthogonal to the $z$-axis, and includes points
vertically above that plane.
Define a \emph{convex cap} $\C$ of angle $\Phi$ to be $C=\P \cap H$
for some $\P$ and $H$, such that $\phi(f) \le \Phi$ for all $f$ in $\C$.\footnote{
This definition accords with Alexandrov's ``polyhedral cap"\cite[p.184]{a-cp-05}.}
\end{quotation}
Here I'd like to loosen this definition to allow a jagged boundary
rather than the planar boundary obtained by intersection with $H$:
\begin{definition}
A \emph{jagged convex cap} $\C$ of angle $\phi$ is the collection of
all faces of a convex polyhedron $\P$ each of whose face-normals makes an angle 
strictly less than $\phi$ to the (vertical) $z$-axis.
\end{definition}
As in~\cite{o-eunfcc-17},
we only consider $\phi \le 90^\circ$, which implies that the projection
of $\C$ onto the $xy$-plane is one-to-one.
(If  $\phi=90^\circ$, face normals of $\C$ make an angle $< 90^\circ$,
so no face of $\C$ is vertical.)
Note that $\C$ is not a closed polyhedron; it has no ``bottom,''
but rather a ``jagged" boundary $\bC$.
An example is shown in 
Fig.~\figref{Polyh_s1_n50}.
We will henceforth abbreviate ``jagged convex cap" to ``cap."
\begin{figure}[htbp]
\centering
\includegraphics[width=0.75\linewidth]{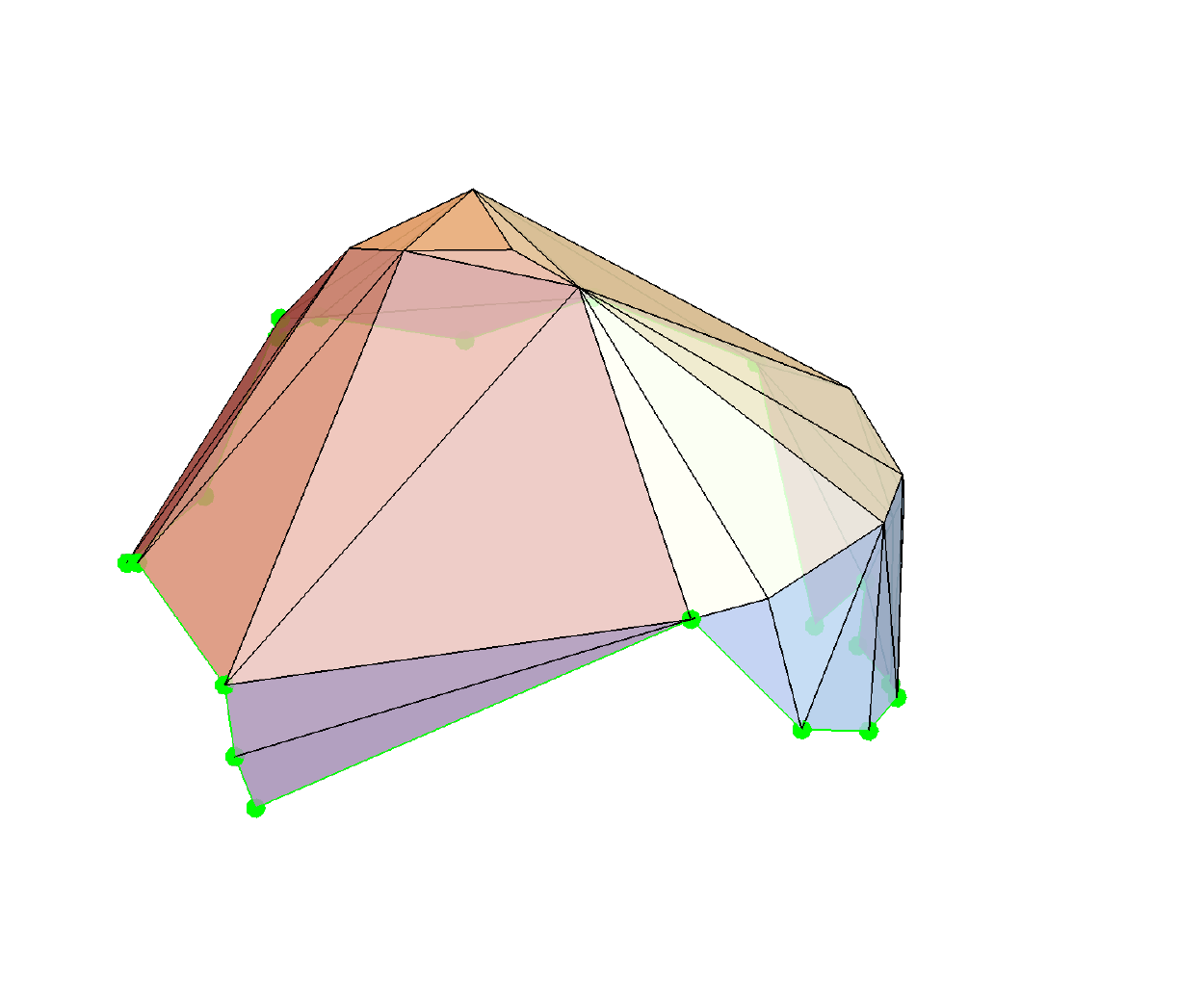}
\caption{A convex cap $\C$ of $31$ vertices, $\phi=90^\circ$.
The boundary $\bC$ is marked in green.}
\figlab{Polyh_s1_n50}
\end{figure}

If $\C$ results from the intersection of $\P$ with a half-space, then it is
evident it is topologically (homeomorphic to) a disk. The same holds for the new definition, but perhaps not as evidently:\footnote{
This seems assumed in various convex hull algorithms, but I could not find
a proof.}
\begin{lemma}
A jagged cap $\C$ (as defined above) is topologically a disk,
and so its boundary $\bC$ is topologically a circle,
i.e., $\bC$ is a simple polygonal cycle.
\lemlab{disk}
\end{lemma}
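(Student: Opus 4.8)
The plan is to work entirely on the closed surface $\partial\P$, which is homeomorphic to $S^2$ since $\P$ is a convex polyhedron, and to identify $\C$ as a subcomplex whose boundary is a single circle by an Euler-characteristic count. Let $\C'$ denote the complementary closed subcomplex, i.e.\ the union of the closed faces of $\P$ whose outward normal makes an angle $\ge\phi$ with the $z$-axis, so that $\partial\P=\C\cup\C'$ and $\bC=\C\cap\C'$ is the common boundary of each. I would first reduce the lemma to two assertions: (A) both $\C$ and $\C'$ are $2$-manifolds with boundary (no ``pinch points''), so that $\bC$ is a disjoint union of simple polygonal cycles; and (B) both $\C$ and $\C'$ are connected. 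Granting (A) and (B), the conclusion is forced: a connected compact surface embedded in $S^2$ has genus $0$, so if $\bC$ consists of $b$ cycles then $\chi(\C)=\chi(\C')=2-b$, and additivity $\chi(S^2)=\chi(\C)+\chi(\C')-\chi(\bC)$ with $\chi(\bC)=0$ gives $2=2(2-b)$, hence $b=1$. Thus $\bC$ is a single simple cycle and $\C$ is a disk.

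For (A) the key local fact is that around each vertex $v$ of $\P$ the faces of $\C$ incident to $v$ occur in one contiguous cyclic block; the same is then automatically true of the $\C'$-faces at $v$ (the complementary arc), so neither complex pinches. To prove contiguity I would use the outward normals $n_1,\dots,n_k$ of the faces around $v$: in cyclic order these are exactly the extreme rays of the normal cone $N_v$, which is a pointed convex cone, so the $n_i$ are the vertices, in order, of the convex spherical polygon $Q_v=N_v\cap S^2$. The faces of $\C$ at $v$ are those with $n_i$ in the cap $U=\{\,n:\angle(n,z)<\phi\,\}$. Since $\phi\le 90^\circ$, $U$ is a cap of angular radius at most $90^\circ$, hence spherically convex with a single circular boundary; the boundary of the convex region $Q_v$ and the boundary circle of the convex region $U$ therefore cross at most twice, so $\partial Q_v$ is split into at most one arc inside $U$ and one outside, and the $n_i$ lying in $U$ are contiguous. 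I expect this to be the main obstacle, both because it is where the hypothesis $\phi\le 90^\circ$ is genuinely used (for larger $\phi$ the cap is not convex and four crossings, i.e.\ a pinch, can occur) and because the ``at most two crossings'' fact must be justified for the convex spherical polygon $Q_v$ against the circle $\partial U$, not merely for two round circles.

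For (B) I would show that every face of $\C$ can be joined within $\C$ to the topmost face by means of the argmax map: for a direction $u$ let $\mathrm{Face}(u)=\arg\max_{x\in\P}\langle x,u\rangle$. As $u$ travels along the meridian geodesic from $n_f\in U$ to the north pole $z$, the latitude $u_z$ only increases, so the whole arc stays in $U$; meanwhile $\mathrm{Face}(u)$ sweeps out a connected chain of faces, edges, and vertices, and whenever it is a $2$-face $f_i$ one has $n_{f_i}=u\in U$, so $f_i\in\C$. This connects $f$ to the top through $\C$, giving connectivity of $\C$; the identical argument with meridians descending to the south pole (along which $u_z$ only decreases, keeping $u$ in the complementary cap) connects every face of $\C'$ to the bottom, giving connectivity of $\C'$. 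With (A) and (B) the Euler count above finishes the proof; the only remaining bookkeeping is the degenerate cases where $\C$ is a single face (trivially a disk) or where a top or bottom feature is an edge or vertex rather than a face, each handled by a small perturbation of $u$.
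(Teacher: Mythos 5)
Your overall architecture --- (A) manifold-with-boundary at every vertex, (B) connectivity of both $\C$ and its complement, then an Euler-characteristic count on $S^2$ to force a single boundary cycle --- is a genuinely different route from the paper's supporting-plane/shadow-boundary sweep, and it has the virtue of isolating exactly where the difficulty lives. Unfortunately the step you yourself flag as the main obstacle, that $\partial Q_v$ and the boundary circle of the cap $U$ cross at most twice, is not just unjustified: it is false when $\phi<90^\circ$, and the contiguity it is meant to deliver genuinely fails. When $\phi=90^\circ$ the condition $n\in U$ is the linear condition $\langle n,z\rangle>0$, $\partial U$ is a great circle, and a linear functional on the cyclically ordered vertices of a convex spherical polygon does have a contiguous superlevel set; your argument is sound there. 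But for $\phi<90^\circ$ the condition on a \emph{unit} normal is $\langle n,z\rangle>\cos\phi$, which is not linear in the ray direction: under gnomonic projection to the plane tangent at $z$, $Q_v$ becomes an ordinary convex polygon and $U$ a round disk, a circle can cross a convex $k$-gon in up to $2k$ points, and a convex polygon can have two opposite vertices inside a disk with the two between them outside (a thin diamond straddling a diameter). Concretely, let $v$ be a vertex of $\P$ whose four incident faces have outward normals proportional to $(\pm\epsilon,0,1)$ and $(0,\pm\epsilon,\tfrac12)$; in cyclic order these are the extreme rays of a legitimate pointed normal cone, the first pair makes angle $\arctan\epsilon$ with $z$ and the second pair $\arctan 2\epsilon$, so for any $\phi$ strictly between these values exactly the first and third faces belong to $\C$ and they meet only at $v$. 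That is a pinch point, so assertion (A) fails as stated, and indeed $\C$ need not be a disk under the paper's literal definition.

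This is not a locally patchable defect: the example shows the lemma needs either $\phi=90^\circ$ or an extra hypothesis (e.g., restricting $\C$ to faces reachable from the top, or genericity of $\P$); the paper's own proof has the same soft spot, since its ``exactly two shadow edges per vertex'' claim is argued only with vertical supporting planes and the closing assertion that the reasoning ``goes through unchanged'' for $\phi<90^\circ$ does not survive the configuration above. A secondary gap sits in (B): along your meridian arc, $\mathrm{Face}(u)$ is a $2$-face only at isolated parameter values (faces of $\P$ correspond to single points of the normal sphere), so the chain consists mostly of vertices and edges of $\P$, and a vertex $w$ with $N_w\cap U\neq\emptyset$ need not have any incident face-normal in $U$; hence the chain is connected but need not stay inside $\C$. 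For $\phi=90^\circ$ both issues evaporate (every normal cone meeting the open upper hemisphere has an extreme ray there), and your proof, with the two-crossing claim justified by linearity, is a valid and arguably tidier alternative to the paper's argument.
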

\begin{proof}
It seems easiest to see this for $\phi=90^\circ$, because then
this becomes a version of the ``shadow problem''~\cite{g-ssp3s-2002}.\footnote{
I have only found literature for smooth convex bodies, as opposed to
polyhedra.}
For $\phi=90^\circ$, we can view the cap $\C$ as the portion of the polyhedron
$\P$ illuminated by a light at $z=+\infty$. The remainder of $\P$ is in shadow,
and the shadow boundary $\bC$ is the collection of shadow edges separating
light above from dark below.
Note that faces of $\P$ that are themselves vertical are considered in
shadow, because only those faces whose normals make an angle strictly
less than $\phi$ belong to the cap and so are illuminated.

Let $\Pi$ be a vertical plane supporting $\P$.
\begin{itemize}
\squeezelist
\item If $\Pi$ contains a single vertex $v$, then $v$ is part of the shadow boundary:
$v \in \bC$.
\item If $\Pi$ contains a single edge $e$, then $e \in \bC$.
\item If $\Pi$ contains three or more noncollinear points of $\P$, then
$\Pi$ contains a face of $\P$. Because that face is not illuminated, only
its upper edges are part the shadow boundary $\bC$.
See Fig.~\figref{VertPlanes}(a).
\end{itemize}
\begin{figure}[htbp]
\centering
\includegraphics[width=0.75\linewidth]{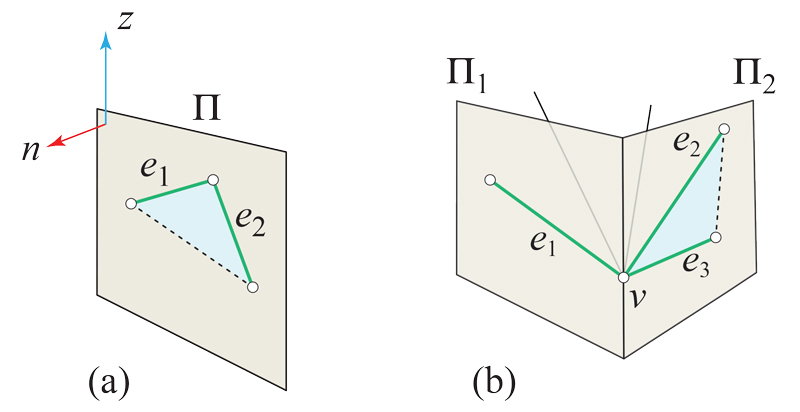}
\caption{(a)~$e_1,e_2 \in \bC$; the face in $\Pi$ is not illuminated. 
(b)~A vertex of degree-$3$ determines a vertical face.
$e_3 \not\in \bC$.}
\figlab{VertPlanes}
\end{figure}

In order to prove the shadow boundary is a simple polygonal cycle,
consider a vertex $v \in \bC$ in a vertical plane $\Pi$.
Rotate $\Pi$ about a vertical line through $v$ until it hits $\P$ to
the left, and rotate in the other direction until it hits $\P$ to the right.
If either of these planes $\Pi_1$ or $\Pi_2$ contains more than one edge incident to $v$
(as does $\Pi_2$ in Fig.~\figref{VertPlanes}(b)), then again we have
identified a face in the vertical plane, and only the upper edges of that
face are part of $\bC$ ($e_2$ in the figure).
Thus each vertex $v \in \bC$ is incident to exactly two edges in $\bC$.
So these edges form a simple polygonal cycle.

It only remains to show that the shadow boundary is connected,
i.e., there is not more than one such cycle.
If $\bC$ has a disconnected hole, then a vertical plane through
an upper edge of the hole would also have to include (a portion of)
a lower edge, requiring the hole to determine a vertical face, which
we've seen is not part of $\bC$. 

Now consider $\phi < 90^\circ$.
For any vector $u$ along a generator\footnote{
A \emph{generator} of a cone is a line through the cone apex and lying in the
cone surface.}
of the $\phi$-cone surrounding $z$, let
$\Pi$ be a plane orthogonal to $u$.
Sweep $\Pi$ inward/downward until it hits $\P$; see Fig.~\figref{Phi60}.
The upper edges of these contacts, for all $u$, constitute
the equivalent of the shadow boundary. Again $\Pi$ might
be flush with a face of $\P$, but then that face is not illuminated
because its normal is exactly $\phi$ from $z$.
So only its upper edges are part of $\bC$. 
\begin{figure}[htbp]
\centering
\includegraphics[width=0.75\linewidth]{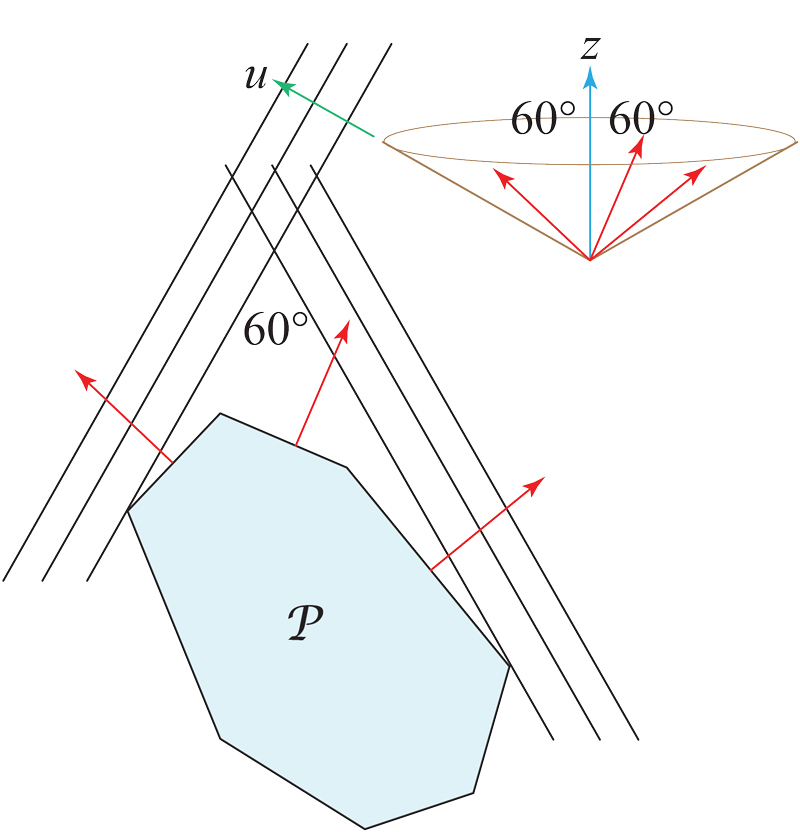}
\caption{Side-view depiction for $\phi=60^\circ$.}
\figlab{Phi60}
\end{figure}
The reasoning used for $\phi = 90^\circ$ now goes through unchanged.
\end{proof}

\section{Extension of Cap to Unbounded Polyhedron}
\seclab{Extension}
Any convex cap $\C$ can be extended to an unbounded polyhedron 
$\C^\infty$.
We first illustrate this before describing a procedure to construct $\C^\infty$
from $\C$.
The extension may be achieved by augmenting $\C$ with the
intersection of half-spaces bounded by planes containing
all the \emph{boundary faces} of $\C$, that is, the faces that share
an edge with $\bC$.
Note that there is no need to include faces only incident to a vertex
of $\bC$, as their extensions fall outside $\C^\infty$.

\begin{figure}[htbp]
\centering
\includegraphics[width=1.0\linewidth]{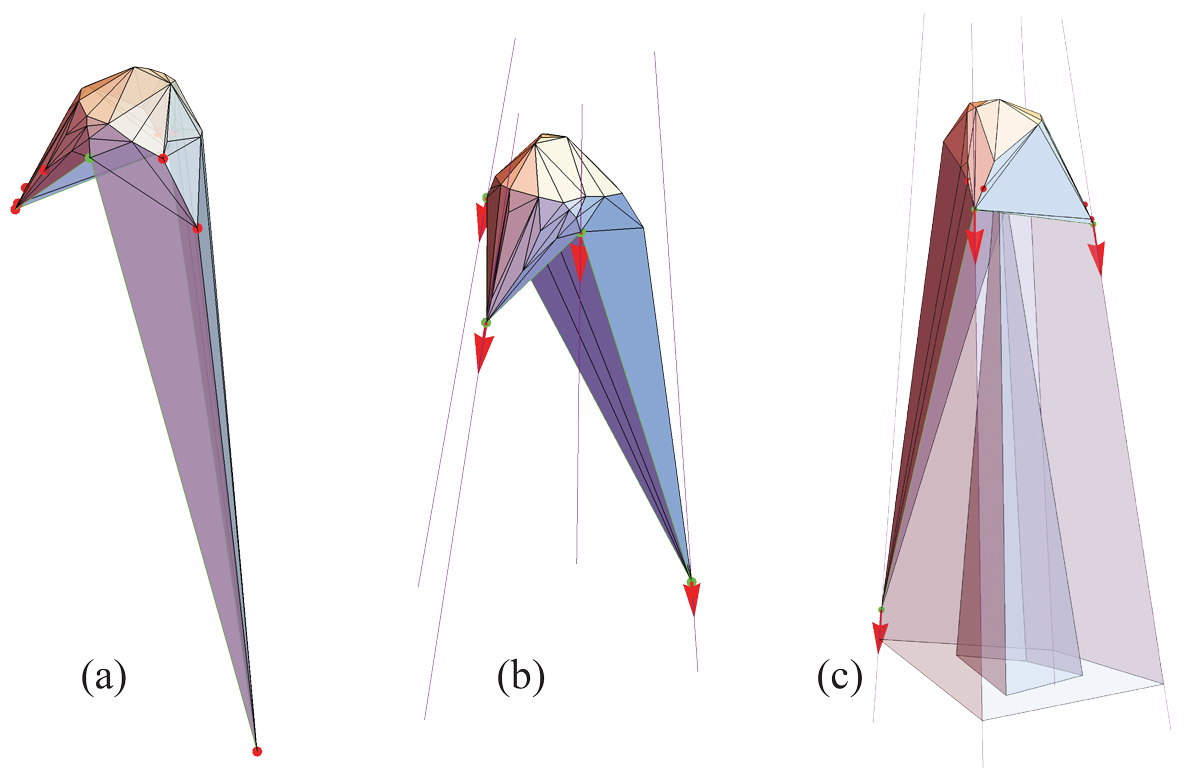}
\caption{(a)~Extension of $\C$ in Fig.~\protect\figref{Polyh_s1_n50}.
Vertices created by the extension marked red.
(b)~Unbounded edges.
(c)~Unbounded faces and the limit angle.
 (Images not to same scale, nor viewed from the same angle.)}
\figlab{Triple_s1_n50}
\end{figure}

In general, $\C^\infty$ includes vertices not in $\C$, and three
or more unbounded edges (rays) bounding unbounded faces.
Fig.~\figref{Triple_s1_n50}(a)
shows all but the unbounded faces of $\C^\infty$ for the cap $\C$ shown
earlier in Fig.~\figref{Polyh_s1_n50}.
Note that the number of unbounded faces of $\C^\infty$ is generally
smaller than the number of boundary faces of $\C$.
The unbounded edges do not (in general) meet in a point if extended upward,
but if those rays are brought together, they delimit
Alexandrov's ``limit angle," which is shown in 
Fig.~\figref{Triple_s1_n50}(c).

We now describe one route to calculate $\C^\infty$.
\section{Constructing the Extension of a Cap}
\seclab{Dual}
Because of the wide availability of code to compute hulls from points
in $\R^3$ compared to the apparent paucity of code to
directly intersect half-spaces, we use the well-known duality
to intersect the half-spaces via a hull of points in a dual space.\footnote{
E.g.,\cite[Ch.25]{hp-gaa-2011}.}
In particular, these are the steps followed:

\begin{enumerate}
\squeezelist
\item For every boundary face of $\C$, compute the plane
equation determined by the face:
$z=ax+by+c$.
\item Dualize each plane to the point $(a,b,-c)$. Call the set of these point $P^*$.
\item Take the convex hull of $P^*$. The upward-faces of this hull
form the lower envelope $L$, and
represents the intersection of the half-spaces.
\item Determine the plane equation for each upward-face of $L$ and dualize back, forming
the set of points $P^{**}$. These are the points of intersection of three
planes of the extension. In Fig.~\figref{Triple_s1_n50}(a), $10$ vertices are added.
\item Join $P^{**}$ with the vertices of $\C$, and again take the convex hull.
Retain only the upward-facing faces. This represents all of the bounded
part of $\C^\infty$
missing only the unbounded faces and edges. Call this $\C'^\infty$.
\item Construct the unbounded edges from the boundary faces
of $\C'^\infty$
(the red rays in Fig.~\figref{Triple_s1_n50}(b)), and from those the unbounded faces. 
We know all the boundary faces of $\C'^\infty$ extend to unbounded faces,
with each adjacent pair of these faces intersecting in an unbounded edge, so this computation is straightforward.
Now $\C^\infty$ is complete.
\end{enumerate}

We illustrate with another similar example before turning to
Alexandrov's limit angle.
See Figs.~\figref{Top_s5_n50} and~~\figref{Triple_s5_n50}.
\begin{figure}[htbp]
\centering
\includegraphics[width=0.85\linewidth]{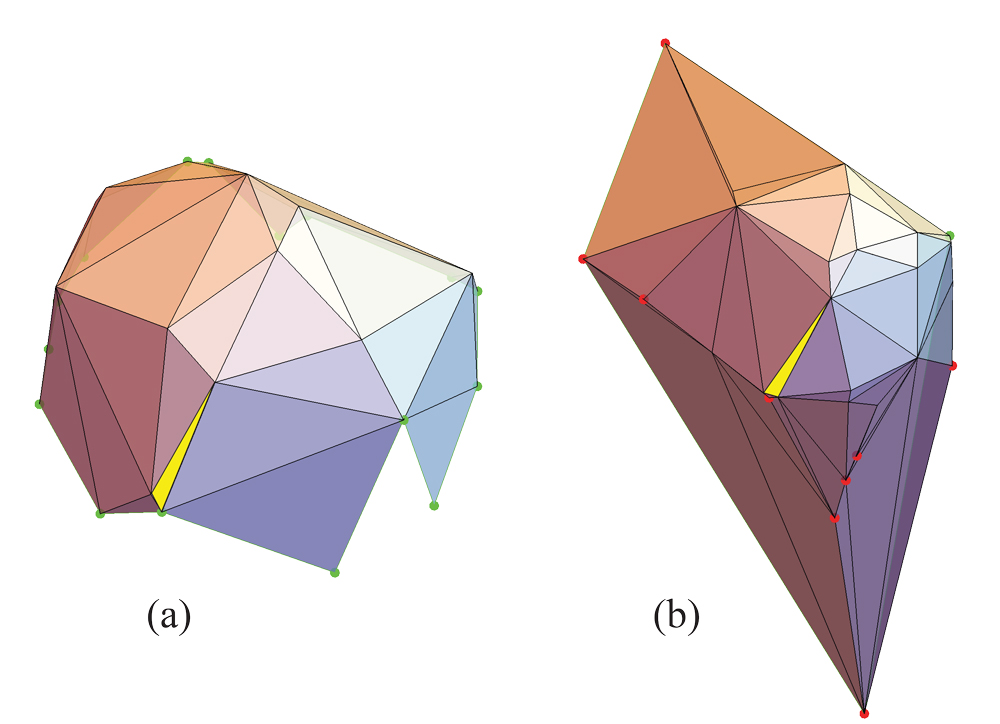}
\caption{(a)~A cap $\C$ of $27$ vertices.
(b)~$\C'^\infty$, the bounded portion of $\C^\infty$, top view (greatly distorted). 
One face
is marked (yellow) for orientation. New vertices are red.}
\figlab{Top_s5_n50}
\end{figure}

\begin{figure}[htbp]
\centering
\includegraphics[width=0.9\linewidth]{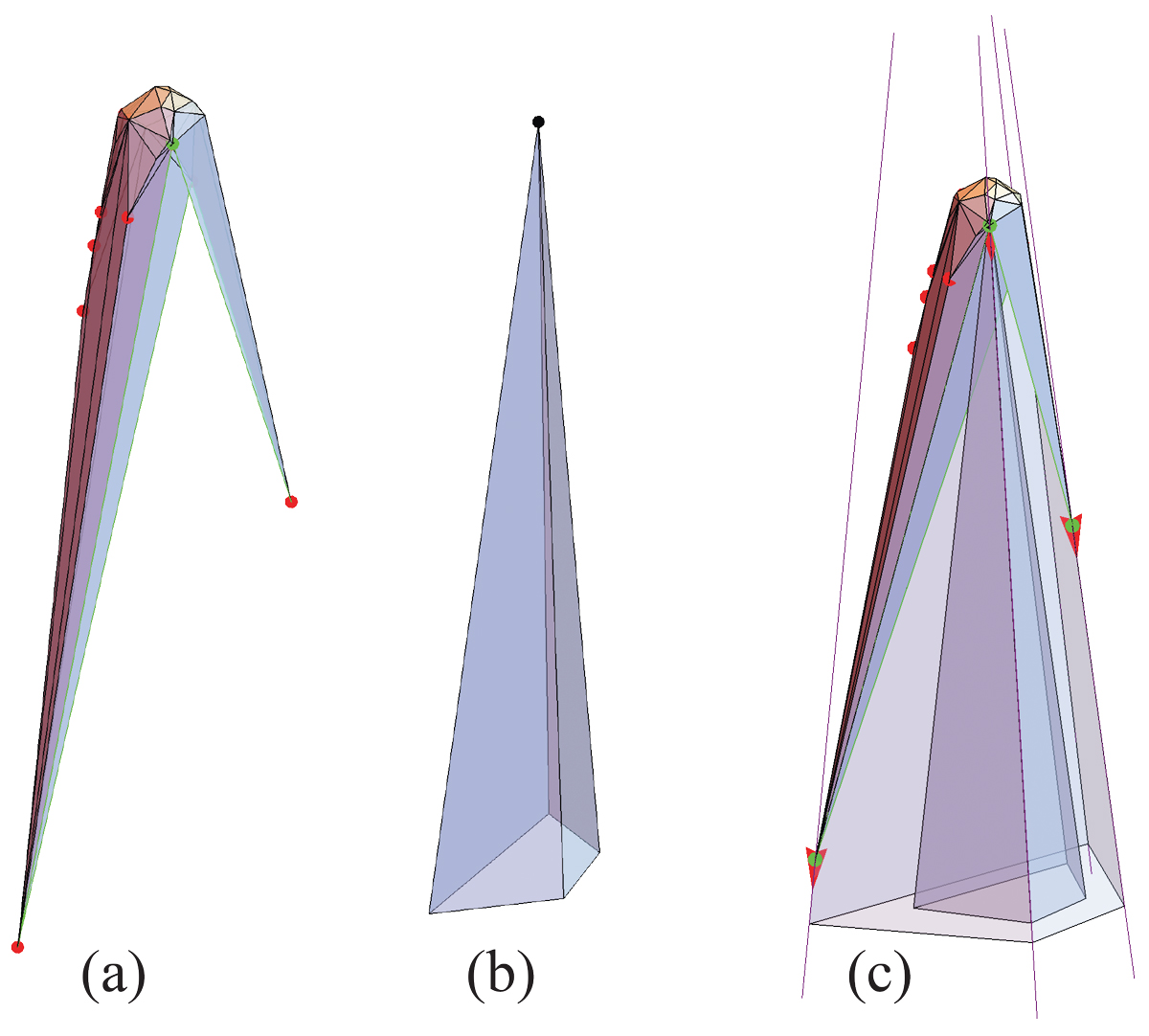}
\caption{(a)~Extension of cap in Fig.~\protect\figref{Top_s5_n50}(a),
side view.
$9$ new vertices (red) added to $\C$ by extending faces.
(b)~Limit angle $\V$.
(c)~$\C^\infty$ and limit angle inside. 
(Images not to same scale, nor viewed from the same angle.)}
\figlab{Triple_s5_n50}
\end{figure}

\newpage
\section{Alexandrov's Limit Angle}
\seclab{AlexLA}

Alexandrov~\cite[p.29]{a-cp-05} proves this theorem
defining the \emph{limit angle}:

\begin{theorem}[\cite{a-cp-05}]
Let $\P$ be an unbounded convex polyhedron.
If we draw all rays from some point $O$ that are parallel to the rays
lying in $\P$, then we obtain the \emph{limit angle} $\V$ of $\P$.
This is a convex polyhedral angle whose edges and faces are parallel
to the unbounded edges and faces of the polyhedron.
If the point $O$ lies in $\P$, then the angle $\V$ lies in $\P$.
\thmlab{LA}
\end{theorem}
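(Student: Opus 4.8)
The plan is to recognize $\V$ as the \emph{recession cone} (asymptotic cone) of the closed convex polyhedron $\P$ and to read all three assertions off its half-space description. Writing $\P = \{x \in \R^3 : \langle a_i, x\rangle \le b_i,\ i = 1,\dots,m\}$ as a finite intersection of half-spaces, the first task is to characterize the directions of rays lying in $\P$: a ray $\{p + t d : t \ge 0\}$ is contained in $\P$ if and only if $\langle a_i, d\rangle \le 0$ for every $i$. Indeed, containment forces $\langle a_i, p\rangle + t\langle a_i, d\rangle \le b_i$ for all $t \ge 0$, which is possible only if $\langle a_i, d\rangle \le 0$; conversely, if every $\langle a_i, d\rangle \le 0$ then for any $p \in \P$ the entire ray satisfies all the defining inequalities. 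A useful byproduct is that this condition never mentions the base point $p$, so the set of ray-directions is independent of where the rays start; hence collecting them at a single apex $O$ yields a well-defined object $\V$.

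From this characterization the cone structure is immediate: $\V = \{d : \langle a_i, d\rangle \le 0,\ i = 1,\dots,m\}$ is an intersection of finitely many half-spaces through $O$, so it is a convex polyhedral cone, i.e., a convex polyhedral angle with apex $O$. For the parallelism claim I would pass through recession cones of faces. For a face $G$ of $\P$ cut out by active constraints $\langle a_i, x\rangle = b_i$ $(i \in I)$, its recession cone is $\{d \in \V : \langle a_i, d\rangle = 0,\ i \in I\}$, which is a face of $\V$ lying in the planes $\langle a_i, d\rangle = 0$; each such plane is parallel to the facet plane $\langle a_i, x\rangle = b_i$ of $\P$. One then checks that $G$ is unbounded exactly when this recession cone is nonzero: a nonzero $d$ in it is, by the argument above, a genuine recession direction of $G$, and conversely an unbounded $G$ must contain a ray. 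Applying this with $\dim G = 2$ and $\dim G = 1$ matches the unbounded faces and unbounded edges of $\P$ with the planar faces and edges of the angle $\V$, all parallel to their partners.

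The containment statement is then the easy consequence. If $O \in \P$ and $d \in \V$, the characterization gives $\langle a_i, d\rangle \le 0$ for all $i$, so the ray $\{O + t d : t \ge 0\}$ satisfies every defining inequality of $\P$ and therefore lies in $\P$; taking the union over $d \in \V$ shows that the angle $\V$, based at $O$, lies in $\P$.

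I expect the main obstacle to be not any single inequality but the clean bookkeeping of the face correspondence: verifying that $G \mapsto \operatorname{rec}(G)$ really does carry faces of $\P$ to faces of $\V$ without dimension or adjacency surprises, and stating the parallelism as ``every planar face and every edge of $\V$ is parallel to an unbounded face, respectively edge, of $\P$, and conversely'' rather than claiming a naive bijection. Degenerate configurations --- an unbounded $2$-face of $\P$ whose recession cone is only a single ray, or a bounded facet whose inequality is nonetheless active in cutting out $\V$ --- are handled uniformly by the recession-cone criterion above, so no separate case analysis should be needed.
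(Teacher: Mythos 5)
The first thing to know is that the paper does not prove Theorem~\thmref{LA} at all: it is quoted from Alexandrov~\cite{a-cp-05} and used as a black box, so there is no in-paper argument to measure yours against. Alexandrov's own route (alluded to in Section~\secref{Curvature}) is geometric, via an ``infinite similarity contraction'' of $\P$ toward a point, under which the unbounded edges and faces converge to the edges and faces of $\V$. Yours is the standard modern route through the half-space description: $\V$ is the recession cone $\{d : \langle a_i, d\rangle \le 0 \text{ for all } i\}$, which immediately yields the polyhedral-cone structure, the independence of the base point, and the containment $\V \subseteq \P$ when $O \in \P$. Those three parts of your argument are complete and correct, and arguably cleaner than the contraction picture.

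Where you have a genuine gap rather than bookkeeping is the parallelism claim. You prove one direction: the recession cone of a face $G$ of $\P$ is a face of $\V$ lying in planes parallel to those cutting out $G$, and it is nonzero exactly when $G$ is unbounded. But Alexandrov's sentence quantifies over the edges and faces \emph{of $\V$}: every edge and every $2$-face of the limit angle must be parallel to an unbounded edge, respectively unbounded face, of $\P$. That is a surjectivity statement about your map $G \mapsto \operatorname{rec}(G)$, and it does not follow from what you wrote; the constraints active along a face of $\V$ need not be simultaneously active anywhere on $\P$, so you cannot simply read off a preimage. The statement is true, but it needs an exposure argument: given an extreme ray $\operatorname{cone}(d)$ of $\V$, pick a linear functional $c$ with $\langle c, d\rangle = 0$ and $\langle c, d'\rangle < 0$ for every other recession direction $d'$; then $c$ is bounded above on $\P$, its optimal face $G$ is an unbounded face of $\P$ with recession cone exactly $\operatorname{cone}(d)$, and any unbounded edge of $G$ is an unbounded edge of $\P$ parallel to $d$. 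A similar argument handles the $2$-faces of $\V$ (for which it suffices to note that a facet-defining inequality of $\V$ is facet-defining for $\P$ in an irredundant description, and the corresponding facet has two-dimensional recession cone, hence is unbounded). With that supplement your proof is a complete and legitimate alternative to Alexandrov's.
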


After calculating the unbounded edges of  $\C^\infty$, it is easy to
construct $\V$, which is displayed inside  $\C^\infty$
in Fig.~\figref{Triple_s1_n50}(c), and separately in~\figref{Triple_s5_n50}(b).
The construction simply joins to a point $O$ 
vectors along the unbounded edges.

\subsection{Curvature of the Limit Angle}
\seclab{Curvature}
The curvature at a polyhedron vertex $v_i$ is $\o(v_i) = 2 \pi - \S \a_j$,
where $\a_j$ are the face angles incident to $v_i$.
The curvature of the limit angle of an unbounded polyhedron
such as $\C^\infty$ is equal to the sum of the curvatures at
all vertices of  $\C^\infty$: $\S \o(v_i)$ over all $v_i \in C^\infty$.
For a closed polyhedron, this sum is of course $4 \pi$ by
Gauss-Bonnet, but for $\C^\infty$, with $\phi \le 90^\circ$, 
the sum is always strictly less than $2 \pi$.
Note that the curvature of the limit angle is not the sum
of the curvatures of the vertices of the cap $\C$: The new vertices
created by extending to $\C^\infty$ each  have some positive curvature
that contribute to the limit angle total.

Alexandrov proved the identity between curvatures via the \emph{spherical image},
(also known as the ``Gaussian sphere")~\cite[p.43]{a-cp-05}: 

\begin{theorem}[\cite{a-cp-05}]
The spherical image of an unbounded convex polyhedron
coincides with the spherical image of its limit angle.
\thmlab{Spherical}
\end{theorem}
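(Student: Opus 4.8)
The plan is to prove that both spherical images coincide with one explicitly identifiable region of the unit sphere $S^2$: the polar of the recession cone of $\P$. Throughout, the spherical image of a convex polyhedron is the set of unit outward normals $n$ to its supporting planes, and the governing observation is that $n$ belongs to this set exactly when the support function $h(n)=\sup_{x\in\P}\langle x,n\rangle$ is finite. The whole argument reduces to evaluating this finiteness condition for $\P$ and for $\V$ and checking that it cuts out the same set of directions.

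First I would handle $\P=\C^\infty$. Every (line-free) convex polyhedron decomposes as $\P=Q+K$, where $Q$ is the convex hull of its finitely many vertices and $K$ is the recession cone, the set of directions $d$ with $x+td\in\P$ for all $x\in\P$ and all $t\ge 0$. Because $\sup_{x\in\P}\langle x,n\rangle=\max_{v}\langle v,n\rangle+\sup_{d\in K}\langle d,n\rangle$, the support function is finite precisely when $\langle d,n\rangle\le 0$ for every $d\in K$, i.e. when $n$ lies in the polar cone $K^{\circ}=\{n:\langle d,n\rangle\le 0 \text{ for all } d\in K\}$. Hence the spherical image of $\P$ equals $K^{\circ}\cap S^2$. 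The essential identification is that $K$ is exactly the limit angle of Theorem~\thmref{LA}: the rays lying in $\P$ are by definition its recession directions, so $\V$ is the cone $K$ based at the point $O$.

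Next I would compute the spherical image of $\V$ directly. A convex polyhedral angle with apex $O$ admits a supporting plane through $O$ with outward normal $n$ precisely when $\langle d,n\rangle\le 0$ for all $d\in\V$, while supporting planes at points of its faces and edges contribute only normals lying on the boundary of that same region. Thus the spherical image of $\V$ is again $\V^{\circ}\cap S^2=K^{\circ}\cap S^2$, and the two images coincide. This meshes with Theorem~\thmref{LA}: the bounding great-circle arcs of the image of $\P$ join the normals of the unbounded faces, and since those faces are parallel to the faces of $\V$ they supply identical normals and identical arcs, enclosing the same spherically convex region.

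The main obstacle is not the polarity bookkeeping but the verification that the spherical image of the \emph{unbounded} polyhedron realizes all of $K^{\circ}\cap S^2$ and nothing beyond it --- that every interior direction of $K^{\circ}$ is attained by a supporting plane touching a genuine vertex of $\P$, while the boundary directions are exactly those of the unbounded faces and edges. This is precisely where the decomposition $\P=Q+K$ does the work, reducing the unbounded case to the finiteness of $h(n)$. Granting that, the coincidence is immediate, and the curvature statement of Section~\secref{Curvature} follows at once, since the area of a spherical image equals total curvature and the two images have equal area.
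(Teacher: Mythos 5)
Your argument is correct, but note that the paper itself offers no proof of this theorem: it is quoted verbatim from Alexandrov and used only to deduce that the total curvature of $\C^\infty$ equals the curvature at the apex of $\V$. So the comparison here is really between your proof and Alexandrov's original one. Your route is the modern convex-analysis one: decompose $\P=Q+K$ (Minkowski--Weyl), observe $h_\P=h_Q+h_K$, note that $h_K(n)$ is $0$ on the polar cone $K^\circ$ and $+\infty$ off it, and identify $\V$ with $K$ translated to $O$, so that both spherical images equal $K^\circ\cap S^2$. This is sound; the one point you leave implicit is that membership in the spherical image requires a supporting plane that actually \emph{touches} the polyhedron, i.e., that the supremum $h_\P(n)$ is attained whenever finite --- automatic for a polyhedron (a finite intersection of half-spaces), but worth saying, since for general unbounded convex bodies finiteness and attainment can differ. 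Alexandrov's own proof is more synthetic: he argues directly that every supporting plane of $\P$ can be translated to a parallel supporting plane of $\V$ at its apex and conversely (essentially via the ``infinite similarity contraction'' the paper mentions, under which all supporting directions are preserved), without invoking support functions or the $Q+K$ decomposition. What your approach buys is a crisp identification of the common spherical image as an explicit set, $K^\circ\cap S^2$, which also makes the later curvature bookkeeping (area of the spherical image equals total curvature, hence $\sum_i\o(v_i)$ equals the limit-angle curvature) immediate; what Alexandrov's buys is independence from the structure theory of polyhedral convex sets, which he is in the business of developing rather than assuming.
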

This suffices to prove equality, because the area of the spherical image of a (convex) vertex $v_i$ is precisely
that vertex's curvature $\o(v_i)$.

Two further remarks on the limit angle curvature:
\begin{enumerate}
\squeezelist
\item Alexandrov imagines an ``infinite similarity contraction''~\cite[p.28]{a-cp-05}
which in the limit reduces an unbounded polyhedron to the limit angle.
During the contraction, the curvature of all vertices remains fixed, because
all edges remain parallel to themselves (so all face angles remain fixed).
So it is natural that in the limit, all the curvature is transferred to the limit angle
vertex.
\item In~\cite{ov-dcpce-14} we detailed a method to merge two vertices into one,
retaining the overall curvature of the cap. Repeating this reduces any unbounded polyhedron
to a cone whose one vertex has the same curvature
$\S \o(v_i)$ as the limit angle curvature.
\end{enumerate}

\section{Summary}
\seclab{Summary}
\begin{enumerate}
\squeezelist
\item A jagged convex cap is a topological disk.
\item The extension of a cap $\C$ to an unbounded polyhedron $\C^\infty$
can be constructed using plane-point duality. In general $\C^\infty$ has
more vertices than $\C$.
\item The unbounded edges of $\C^\infty$ do not (generally) meet in a point
when extended upward, but if translated to a common point $O$,
they form Alexandrov's limit angle $\V$.
\item The curvature at the apex of the limit angle $\V$ is the same as the 
sum of the curvatures of all vertices of $\C^\infty$.
\end{enumerate}

\bibliographystyle{alpha}
\bibliography{Unbounded}
\end{document}